\newacro{qp}[QP]{Quadratic Program}
\newacro{cbf}[CBF]{Control Barrier Function}
\newacro{zbf}[ZBF]{Zeroing Barrier Function}
\newacro{zcbf}[ZCBF]{Zeroing Control Barrier Function}
\newacro{ds}[DS]{Dynamical System}
\providecommand{\algorithmname}{Algorithm}
\newcommand\footnoteref[1]{\protected@xdef\@thefnmark{\ref{#1}}\@footnotemark}
\renewcommand{\algorithmiccomment}[1]{\bgroup\hfill\scriptsize//~#1\egroup}
\newcommand{\tr}{{^{\top}}}
\newtheorem{theorem}{Theorem}
\newtheorem{proposition}{Proposition}
\newtheorem{definition}{Definition}
\DeclareMathOperator*{\minimize}{minimize~}
\DeclareMathOperator*{\st}{subject\,to~}
\newcommand{\R}{\mathbb R}
\newcommand{\mc}{\mathcal} 
\definecolor{my_green}{rgb}{0.0,0.49,0.19}
\newcommand{\changed}[1]{{\color{black}#1}}
\begin{document}

\title{Safety of Dynamical Systems with Multiple Non-Convex Unsafe Sets\\Using Control Barrier Functions
}
\author{Gennaro Notomista, \IEEEmembership{Member, IEEE} and Matteo Saveriano, \IEEEmembership{Member, IEEE}
\thanks{ 
\textcopyright 2021 IEEE.  Personal use of this material is permitted.  Permission from IEEE must be obtained for all other uses, in any current or future media, including reprinting/republishing this material for advertising or promotional purposes, creating new collective works, for resale or redistribution to servers or lists, or reuse of any copyrighted component of this work in other works.}
\thanks{This work has been partially supported by the Austrian Research Foundation (Euregio IPN 86-N30, OLIVER)}
\thanks{G. Notomista is with CNRS, University of Rennes, Inria, IRISA, Rennes, France. {\tt{gennaro.notomista}@irisa.fr}}
\thanks{M. saveriano is with Department of Computer Science and Digital Science Center (DiSC), University of Innsbruck, Innsbruck, Austria. {\tt matteo.saveriano@uibk.ac.at}.}
}

\pagestyle{empty}
\maketitle
\thispagestyle{empty}

\begin{abstract}
	This paper presents an approach to deal with safety of dynamical systems in presence of multiple non-convex unsafe sets. While optimal control and model predictive control strategies can be employed in these scenarios, they suffer from high computational complexity in case of general nonlinear systems. Leveraging control barrier functions, on the other hand, results in computationally efficient control algorithms. Nevertheless, when safety guarantees have to be enforced alongside stability objectives, undesired asymptotically stable equilibrium points have been shown to arise. We propose a computationally efficient optimization-based approach which allows us to ensure safety of dynamical systems without introducing undesired equilibria even in presence of multiple non-convex unsafe sets. The developed control algorithm is showcased in simulation and in a real robot navigation application.
\end{abstract}

\begin{IEEEkeywords}
 Constrained control, Stability of nonlinear systems, Robotics
\end{IEEEkeywords}

\IEEEpeerreviewmaketitle

\section{Introduction}\label{sec:intro}
\IEEEPARstart{T}{he} study of safety of dynamical systems is rapidly developing thanks to new theoretical and computational tools which are now available, and which allow us to tackle fundamental problems in a large variety of disciplines, ranging from fast computation of collision-free robot motion \cite{frazzoli2002real} to safe learning \cite{aswani2013provably,saveriano2019learning}. Given a state space representation of a dynamical system, safety can be defined as the forward invariance condition of a subset of the state space, i.e., a system is safe with respect to a set $\mc S$ if the trajectory of the state, $x(t)$, satisfies $x(t_0)\in\mc S \implies x(t)\in\mc S \quad\forall t\ge t_0$.

Applications of safety can be found numerous in various research fields. These include robotics, where safe planning \cite{petti2005safe}, navigation \cite{loizou2017navigation,thyri2020reactive}, and autonomy \cite{jha2018safe,notomista2020long} have been extensively studied; safe reinforcement learning \cite{perkins2002lyapunov,cheng2019end} in machine learning; and, more recently, epidemiology \cite{molnar2020safety}. 
While planning-like strategies, in the form of optimal and model predictive control, have been developed to deal with safety objectives (see, e.g., \cite{aswani2013provably}), \acp{cbf} \cite{ames2019control}, are gaining popularity thanks to their low computational complexity and their ability to encode a rich variety of safety specifications. 
Nevertheless, the low computational complexity comes at the cost of having a reactive control formulation, as opposed to planning-like strategies. 
 
Reis et al.~\cite{reis2020control} have shown how the reactive nature of \acp{cbf} and the concurrent presence of competing objectives, namely stability and safety, can generate undesired and asymptotically stable equilibrium points. This phenomenon is particularly critical as undesirable asymptotically stable equilibria exist even in presence of convex unsafe regions. In \cite{reis2020control}, the authors propose a solution based on the transformation of the control Lyapunov function used to achieve the stability objective, which works in presence of a single convex unsafe set.

In this paper, we present an approach which solves the problem of undesirable asymptotically stable equilibrium point in the case of multiple and non-convex unsafe regions. The strategy consists of mapping the \textit{real world}, i.e., the system state space with non-convex unsafe regions, to a \textit{ball world}, where unsafe regions are either closed balls or the complement of open balls. Then, in the ball world, which is going to be defined as the image of the real world through a diffeomorphism, we compute a controller to transform the balls by  changing  their location  and  size, so that safety specification on the system in the real world are satisfied. The resulting algorithm consists of a convex quadratic program, which is agnostic of the method used to guarantee stability objectives---i.e., it is not based on the knowledge of the control Lyapunov function used to stabilize the system as done in~\cite{reis2020control}.

\changed{To summarize, the main contributions of this paper are: (i) We present a \ac{cbf}-based control strategy that mitigates the problem of undesirable asymptotically stable equilibrium points for the case of multiple non-convex unsafe regions; (ii) We provide a computationally-efficient optimization-based strategy that implements the proposed solution; (iii) We show the effectiveness of the proposed approach using both simulations and a real robotic experiment of a navigation task.}


\section{Background}\label{sec:background}
In this paper, we employ \acp{cbf} to synthesize constraints on the control input to the obstacles which guarantee the safety, intended as the forward invariance, of a safe set in the state space. \changed{In the following, we recall the definition of \acp{cbf} and an optimization-based approach to synthesize safe controllers.}

\subsection{Control Barrier Functions}\label{subsec:cbf}
Consider the control affine system
\begin{equation}
    \dot{x} = f(x) + g(x)u\label{eq:ds_syst_cont},
\end{equation}
where $x \in \mc X \subset \mathbb{R}^n$ is the state, $\mc X$ compact, and $u \in \mathbb{R}^m$ is the input of the system. $f: \mathbb{R}^{n} \rightarrow \mathbb{R}^{n}$  and $g: \mathbb{R}^{n} \rightarrow \mathbb{R}^{n\times m}$ are locally Lipschitz vector fields. Assume we are interested in keeping the state within a safe set $\mc S \subset \mc D \subset \R^n$ which can be expressed as the zero-superlevel set of a continuously differentiable function $h: \mathcal{D} \rightarrow \mathbb{R}$ as follows:
\begin{equation}
\mc S = \{x \in  \mathcal{D} : h(x) \geq 0 \} \label{eq:c_set}.
\end{equation}

\begin{definition}[\acp{cbf} \cite{ames2019control}]
	\label{def:cbf}
	Let $\mc S \subset \mc D \subset \R^n$ be the zero superlevel set of a continuously differentiable function $h\colon\mc D \to \R$. Then $h$ is a control barrier function (CBF) if there exists an extended class $\mc K_\infty$ function
	$\gamma$ such that, for the system \eqref{eq:ds_syst_cont},
	\begin{equation}
		\label{eq:cbfdefinition}
		\sup_{u \in \mc \R^m}  \left\{L_f h(x) + L_g h(x)u + \gamma(h(x))\right\} \geq 0.
	\end{equation}
	for all $ x \in \mc D$.
\end{definition}
$L_f h(x)$ and $L_g h(x)$ denote the Lie derivatives of $h$ along the vector fields $f$ and $g$. Given this definition of CBFs, the following theorem highlights how they can be used to ensure both set forward invariance (safety) and stability.

\begin{theorem}[Safety and stability \cite{ames2019control}]
	\label{thm:safety}
	Let $\mathcal{S} \subset \mc D\subset \R^n$ be a set defined as the zero superlevel set of a continuously differentiable function $h: \mc D \to \R$. If $h$ is a CBF on $\mc D$ with $0$ a regular value, then any Lipschitz continuous controller $u(x) \in \{ u \in \mc \R^m \colon L_f h(x) + L_g h(x)u + \gamma(h(x)) \geq 0\}$ for the system \eqref{eq:ds_syst_cont} renders the set $\mc S$ forward invariant (safe).  Additionally, the set $\mc S$ is asymptotically stable in $\mc D$.
\end{theorem}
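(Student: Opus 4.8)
The plan is to split the argument into the two claimed properties—forward invariance of $\mc S$ and its asymptotic stability in $\mc D$—and to drive both from a single scalar differential inequality obtained by closing the loop. First I would fix any Lipschitz continuous feedback $u(x)$ selected from the admissible set and form the closed-loop vector field $f(x)+g(x)u(x)$, which is locally Lipschitz, so that solutions exist and are unique. Differentiating $h$ along a trajectory $x(t)$ gives $\dot h = L_f h(x) + L_g h(x)u(x)$, and by membership of $u(x)$ in the admissible set this satisfies $\dot h \ge -\gamma(h)$. This reduces the entire problem to analyzing the scalar comparison system $\dot y = -\gamma(y)$ with $y(t_0)=h(x(t_0))$.

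For forward invariance I would invoke the comparison lemma to conclude $h(x(t)) \ge y(t)$ on the interval of existence. Because $\gamma$ is an extended class $\mc K_\infty$ function, $\gamma(0)=0$, so $y\equiv 0$ is an equilibrium of the comparison system; starting from $y(t_0)=h(x(t_0))\ge 0$, i.e. $x(t_0)\in\mc S$, the solution cannot cross zero and hence $y(t)\ge 0$, giving $h(x(t))\ge 0$ for all $t$, that is $x(t)\in\mc S$. Compactness of $\mc X$ guarantees trajectories stay bounded and rules out finite-time escape.

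For asymptotic stability I would take an initial condition $x(t_0)\in\mc D\setminus\mc S$, so that $y(t_0)=h(x(t_0))<0$. For negative arguments $\gamma(y)<0$ by strict monotonicity of a class $\mc K_\infty$ function, hence $\dot y = -\gamma(y)>0$ and $y(t)$ increases monotonically toward the equilibrium at $0$; therefore $y(t)\to 0^-$ and, via the comparison bound, $\liminf_{t\to\infty} h(x(t)) \ge 0$. Combined with the forward invariance established above, this shows the trajectory converges to $\mc S$, yielding attractivity. The regular-value hypothesis on $0$ then ensures that $\partial\mc S=\{h=0\}$ is a codimension-one manifold with nonvanishing gradient, so $\mc S$ has the correct topological structure and the convergence can be upgraded to set asymptotic stability in the Lyapunov sense.

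The step I expect to be the main obstacle is the rigorous application of the comparison principle when $\gamma$ is merely continuous rather than locally Lipschitz: uniqueness of the comparison ODE—needed to forbid trajectories of $y$ from crossing the equilibrium at $0$—is not automatic. I would handle this either by restricting to locally Lipschitz extended class $\mc K_\infty$ functions, which costs no generality for the construction, or by reasoning directly with the maximal and minimal solutions of $\dot y=-\gamma(y)$ together with the sign of $\gamma$ on each side of the origin, which is enough to trap $h$ on the desired side of $0$ without invoking full uniqueness.
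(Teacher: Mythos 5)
The paper never proves this theorem: it is background material quoted verbatim from \cite{ames2019control}, so there is no in-paper proof to compare against. Your argument follows the same route as the proof in that cited literature (close the loop with the Lipschitz feedback, reduce everything to the scalar differential inequality $\dot h \ge -\gamma(h(x(t)))$, and apply a comparison argument), and it is essentially correct; in particular, you correctly identified the non-uniqueness issue when $\gamma$ is merely continuous and the sign-of-$\gamma$ resolution is the standard fix. One step is under-argued, though: asymptotic stability of the \emph{set} $\mc S$ requires Lyapunov stability of the set in addition to attractivity, and you only establish the latter ($\liminf_{t\to\infty} h(x(t)) \ge 0$). The missing half does not come from the regular-value hypothesis, as your final sentence suggests; it comes from monotonicity already contained in your comparison bound: whenever $h<0$ one has $\dot h \ge -\gamma(h) > 0$, hence $h(x(t)) \ge h(x(t_0))$ and every set $\{x \in \mc D : h(x) \ge -c\}$, $c>0$, is forward invariant. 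Since, by continuity of $h$, compactness, and the regular-value condition, these sets form a neighborhood basis of $\mc S$, Lyapunov stability of $\mc S$ follows; the same compactness argument is what converts $\liminf_{t\to\infty} h(x(t)) \ge 0$ into $\mathrm{dist}(x(t),\mc S) \to 0$. With that repair your proof is complete and coincides with the standard one--effectively using $\max\{0,-h\}$ as a set-Lyapunov function--found in \cite{ames2019control} and its antecedents.
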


\subsection{Safety Controller Design Using Quadratic Program}

In cases where a nominal controller $\hat u$ is designed to guarantee desired performance, such as stability of an equilibrium point of~\eqref{eq:ds_syst_cont}, a safety filter can be effectively realized by implementing the controller solution of the following \ac{qp}, as shown in \cite{Ames17}:
\begin{equation}
    \begin{split}
    & \minimize_{u} \Vert u - \hat u \Vert^2  \\
    & \st L_f h(x) + L_g h(x)u +\gamma(h(x)) \geq 0.
    \end{split}
\label{eq:qp_safety}
\end{equation}
The objective of the \ac{qp} in~\eqref{eq:qp_safety} is to minimize the difference between $u$ and $\hat u$, while the \ac{cbf} constraint guarantees safety. The condition $L_g h(x) \neq 0$ and the fact that $h$ is a \ac{cbf} ensure that the feasible set is non-empty. 
Thus, the closed-loop system $\dot{x}_{cl} = f(x) + g(x)u^*$, where $u^*$ is the solution of \eqref{eq:qp_safety}, is safe with respect to the set $\mc S$ defined in \eqref{eq:c_set}.

\subsection{Undesirable Asymptotic Equilibria in \texorpdfstring{\ac{cbf}}{}--\texorpdfstring{\acp{qp}}{}}

As far as safety is concerned, when using \acp{cbf}, the geometry of the safe set $\mc S$ does not matter, as long as the latter can be expressed as the zero superlevel set of a continuously differentiable function, as in \eqref{eq:c_set}. Nevertheless, although safety is not undermined, system performance may degrade due to the shape of the safe set. Consider the scenario in which asymptotic stability of a desired equilibrium point of the system is to be achieved by means of a control Lyapunov function. In \cite{Ames17}, the proposed optimization problem features a slack variable which embodies the amount by which asymptotic stability is sacrificed in favor of safety. Moreover, from \eqref{eq:qp_safety}, it is clear that a safe controller is selected only based on the current state $x$. Consequently, it is easy to foresee that in case of non-convex obstacles \textit{deadlocks}---i.e., $u^*$ resulting in the undesired condition $\dot x_{cl}=0$---are more likely to happen.

\begin{figure}
\centering
	\includegraphics[width=0.9\columnwidth]{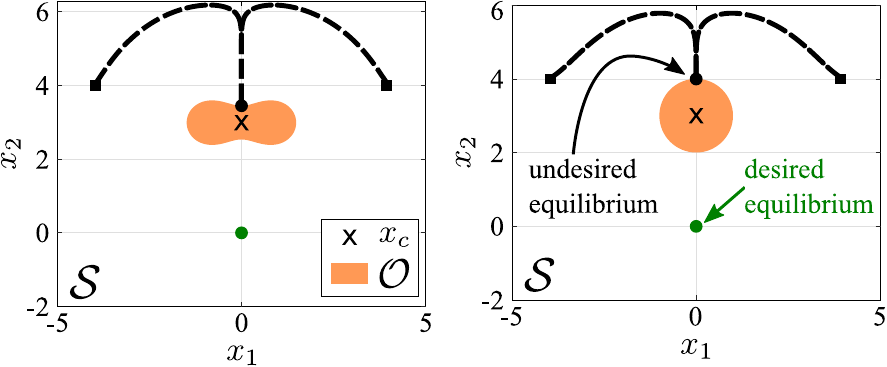}
	\caption{Simulation results showing how \ac{cbf}-based \acp{qp} used to avoid concave (left) and convex (right) obstacles (brown-shaded regions denoted by $\mc O$) generate undesired and asymptotically stable equilibria.}
	\label{fig:undesired_equilibria}
\end{figure}

As an example, consider the case depicted in Fig.~\ref{fig:undesired_equilibria} on the left. The state $x=[x_1,x_2]\tr$ of the dynamical system
\begin{equation}
	\label{eq:dyn_sys_example}
	\dot{x} = - \begin{bmatrix} 6 & 0 \\ 0 & 1 \end{bmatrix} x + u
\end{equation}
is to be kept outside the orange-shaded concave regions. With the following \ac{cbf}
\begin{equation}
\label{eq:cbf_circle}
h(x)= \|x-x_c\|^4 - (x-x_c)\tr\begin{bmatrix}
	10 & 0\\
	0 & -1
\end{bmatrix}(x-x_c),
\end{equation}
where $x_c=[0,3]\tr$, a safe control input $u^*$ is synthesized by solving the \ac{qp} \eqref{eq:qp_safety}, where $\hat u = 0$. The resulting trajectories originating from two different initial conditions are reported. As it can be seen, the funnel-like shape of the unsafe set in the vicinity of $x_c$ results in the closed-loop system to converge to points of the boundary of the safe set, as predicted in \cite{reis2020control}.
	
Though this behavior might seem related to the non-convexity of the obstacle, equilibrium points may arise even with a single convex obstacle. Reis et al.~\cite{reis2020control} have demonstrated that the combination of safety and stability objectives into a \ac{qp} introduces undesirable asymptotically stable equilibrium points in the closed-loop system. In Fig.~\ref{fig:undesired_equilibria} on the right, we show the behavior of the system \eqref{eq:dyn_sys_example} under such conditions. Here, we aim at keeping $x$ outside the circular region of radius $1$ centered at $x_c=[0,3]\tr$, by defining the \ac{cbf} $h(x) = (x - x_c)^2 - 1$.
The resulting trajectories depicted in the figure show how the safe input $u^*$ renders the point $[0,4]\tr$ asymptotically stable, entirely analogously to what happened in the case of non-convex obstacles shown previously. In the next section, we show how to mitigate the issue of spurious asymptotically stable equilibrium points, as well as deadlocks, by reducing the problem only to a zero-measure set of state initial conditions.

\section{Proposed Approach}
\label{sec:approach}

In order to eliminate undesired asymptotically stable equilibrium points in safe control applications where obstacle regions are non-convex, we follow a three-step approach:
\begin{enumerate}
\item First, we map the state space (real world) into the ball world---a nomenclature introduced in~\cite{rimon1991construction}---using a diffeomorphism $F \colon \R^n \to \R^n$.
\item Then, in the ball world, we develop a \ac{cbf}-based safety controller, which allows the \textit{obstacles to move and shrink in the ball world}, to avoid the image $q$ of the state $x$, realizing what we call a \textit{state-avoidance algorithm}. 
\item Finally, by re-evaluating the diffeomorphism $F$ based on the updated locations of the obstacles in the ball world, we compute the input $u$ to control the system in the real world.
\end{enumerate}

In Section~\ref{subsec:ballworld}, we define the ball world, which is used to develop the state-avoidance algorithm in Section~\ref{subsec:activeobstacles}. Then, in Section~\ref{subsec:mainqp}, we present an always feasible, computationally efficient implementation of the state-avoidance algorithm which does not introduce asymptotically stable equilibria and leads to deadlocks only for a zero-measure set of initial states.

\subsection{Ball World}
\label{subsec:ballworld}

{The first step in our approach consists of diffeomorphically mapping the compact state space (real world) $\mc X$ to the ball world $\hat{\mc O}_0$}, whose boundary is the $n-1$-dimensional spherical surface $\mathbb S^{n-1}$ and in which unsafe regions, i.e. the \textit{obstacles}, are closed balls. In the ball world, both the state space $\hat{\mc O}_0$ and the obstacles $\hat{\mc O}_i,~i=1,\ldots,M$ are defined as:
\begin{equation}
\label{eq:obstacles}
\hat{\mc O}_i = \{ q \in\R^n \colon \hat \beta_i(q)\le0 \},~i=0,\ldots,M,
\end{equation}
where $\hat \beta_0(q)=\rho_0^2 - \|q_0-q\|^2$, $\hat \beta_i(q)=\|q_i-q\|^2 - \rho_i^2$, for $i\neq0$. $q_i$ and $\rho_i>0$ denote the center and the radius of the $i$-th obstacle, respectively. We use $q$ to denote the state $x$ mapped into the ball world by the diffeomorphism $F$, i.e. $q=F(x)$. Moreover, quantities with a hat in the ball world are used to distinguish them from their analogous in the real world. The safe set in the ball world can be written as
\begin{equation}
	\label{eq:safesetball}
\hat{\mc S} = \{ q \in \R^n \colon \hat\beta_0(q)\ge 0, \ldots, \hat\beta_M(q)\ge 0\}.
\end{equation}

{The main advantage of considering the ball world is that the obstacles are parameterized by their radius and the location of their center, whose rates of change will be controlled to \textit{let the obstacles avoid the state}, achieving, this way, safety. It is worth noticing that simply transforming obstacles into convex regions does not solve the problem of spurious equilibria (see Fig.~\ref{fig:undesired_equilibria} and~\cite{reis2020control}). Therefore, in the rest of this section, we present an approach that reduces the problem of deadlocks to a zero-measure set of initial configurations.
}

\subsection{Safe Control of Obstacles in the Ball World}
\label{subsec:activeobstacles}

The ball world is only the image of the real world through an appropriately defined diffeomorphism. Therefore, it can be deformed and modified to ensure that the real world state never enters obstacle regions. This approach can be interpreted as a state-avoidance paradigm, in which the obstacles are controlled to avoid the state, as opposed to the more traditional obstacle-avoidance paradigm, where a controller $u$ is sought for $x$ in order to avoid the obstacles. In the state-avoidance paradigm, obstacles in the ball world are displaced from their positions, and their radii are changed in order to satisfy the following conditions:
\begin{enumerate}[label=(C\arabic*)]
\item The  state $q$ of the ball world does not collide with the obstacles
\item Obstacles do not collide with each other
\item Obstacles do not leave the ball world
\end{enumerate}

Condition (C1) is directly related to safety: if $q$ is safe in the ball world, then $x$ is safe in the real world. This is because the diffeomorphism $F$ maps the free space in the real world to the free space in the ball world. Notice that considering the outside of the ball world $\hat{\mc O}_0$ as an obstacle, condition (C1) signifies that the state never leaves the ball world. Conditions (C2) and (C3) are introduced to simplify the definition of the diffeomorphism $F$---given in Sec.~\ref{sec:experiments}.

Constraint-satisfaction for active obstacles can be efficiently enforced through the use of \acp{cbf}. In order to do that, we need to define the dynamics of the obstacle motion. In this paper, we choose to control both the center $q_i$ of each obstacle and its radius $\rho_i$ by means of the velocity of the center of the obstacle $u_{q_i}$ and the rate of change of its radius $u_{\rho_i}$, respectively, resulting in the following single integrator dynamical model for each obstacle $i$:
\begin{equation}
\dot q_i = u_{q_i},\quad
\dot \rho_i = u_{\rho_i}.
\end{equation}
In the case of $\hat{\mc O}_0$, we choose to only change its radius, i.e., $\dot q_0 = 0$. As will be shown in the following, this suffices to provably guarantee safety. With these models in place, we are now ready to present the \acp{cbf} employed to enforce conditions (C1), (C2), and (C3).

\paragraph{\ac{cbf} for Condition (C1)} The objective of keeping $q$ away from $\hat {\mc O}_i$ can be enforced by defining a \ac{cbf} $h_i$ based on the expression of $\hat\beta_i$, i.e., $h_i(q_i,\rho_i) = \| q_i - q \|^2 - \rho_i^2$. The differential inequality in \eqref{eq:cbfdefinition} is equivalent in this case to the following affine constraint on $u_{q_i}$ and $u_{\rho_i}$:
\begin{equation}
	\label{eq:cbf_constr_c1}
	A_{C1,i} \begin{bmatrix}
		u_{q_i}\,
		u_{\rho_i}
	\end{bmatrix}\tr \le b_{C1,i}
\end{equation}
where $A_{C1,i}=\begin{bmatrix}
	-2(q_i-q)\tr & 2\rho_i
\end{bmatrix}$ and $b_{C1,i} = {-2(q_i-q)\tr \dot q + \gamma(h_i(q_i,\rho_i))}.$

Analogously, the objective of keeping $q$ within the ball world $\hat {\mc O}_0$ can be enforced by defining the \ac{cbf} $h_0(q_0, \rho_0) = \rho_0^2 - \| q_0 - q \|^2$ which leads to the following constraint on $u_{\rho_0}$:
\begin{equation}
	\label{eq:cbf_constr_c10}
	A_{C1,0} u_{\rho_0} \le b_{C1,0}
\end{equation}
where $A_{C1,0}\!=\!-2\rho_0$ and $b_{C1,0}\!=\!2(q_0-q)\tr \dot q + \gamma(h_0(q_0,\rho_0))$

\paragraph{\ac{cbf} for Condition (C2)} To make the obstacles not collide with each other, we define the following \ac{cbf} for each pair of obstacles $\hat {\mc O}_i$ and $\hat {\mc O}_j$: ${h_{ij}(q_i,q_j,\rho_i,\rho_j) = \| q_i - q_j \|^2 - (\rho_i+\rho_j)^2}$, which lead to the inequality constraint for the inputs to the obstacles:
\begin{equation}
\label{eq:cbf_constr_c2}
A_{C2,ij}
\begin{bmatrix}
u_{q_i}\,
u_{q_j}\, 
u_{\rho_i}\,
u_{\rho_j} 
\end{bmatrix}\tr \le b_{C2,ij},
\end{equation}
with $A_{C2,ij} = [
	-2(q_i-q_j)\tr ~ 2(q_i-q_j)\tr ~ 2(\rho_i+\rho_j) ~ 2(\rho_i+\rho_j)]$ and $b_{C2,ij}=\gamma(h_{ij}(q_i,q_j,\rho_i,\rho_j))$.

\paragraph{\ac{cbf} for Condition (C3)} To satisfy this condition, the obstacles need to be kept within the boundary of the ball world. Proceeding analogously to the previous two cases, we let $h_{i0}(q_i,q_0,\rho_i,\rho_0) = (\rho_0-\rho_i)^2 - \| q_i - q_0 \|^2$ and get the following affine constraint on $u_{q_i}$, $u_{\rho_i}$, and $u_{\rho_0}$:
\begin{equation}
\label{eq:cbf_constr_c3}
A_{C3,i} \begin{bmatrix}
	u_{q_i}\,
	u_{\rho_i}\,
	u_{\rho_0}
\end{bmatrix}\tr \le b_{C3,i}
\end{equation}
where $A_{C3,i}=\begin{bmatrix}
	2(q_i-q_0)\tr & 2(\rho_0-\rho_i) & -2(\rho_0-\rho_i)
\end{bmatrix}$ and $b_{C3,i}=\gamma(h_{i0}(q_i,q_0,\rho_i,\rho_0))$.

\subsection{State-avoidance algorithm}
\label{subsec:mainqp}

In order to synthesize a control input for the obstacles that keeps $q$ safe, satisfying Conditions (C1), (C2), and (C3), we formulate the following \ac{qp}:
\begin{equation}
\label{eq:stateavoidanceQP}
\begin{aligned}
\textbf{Main QP}\hspace{0.8cm}&\\
\minimize_{u_q,u_\rho} &\|u_q-\hat u_q\|^2 + \kappa \|u_\rho-\hat u_\rho\|^2\\
\st & \eqref{eq:cbf_constr_c1}, \eqref{eq:cbf_constr_c10}, \eqref{eq:cbf_constr_c2}, \eqref{eq:cbf_constr_c3}\\
&\quad\forall i,j\in\{1,\ldots, M\},j>i,
\end{aligned}
\end{equation}
where $u_q$ and $u_\rho$ are the stacked obstacle velocity and radius change rate, respectively, i.e.,
\begin{equation}
u_q=[u_{q_1}\tr~\ldots~u_{q_M}\tr]\tr,~~
u_\rho=[u_{\rho_0}~u_{\rho_1}~\ldots~u_{\rho_M}]\tr,
\end{equation}
and $\hat u_q$ and $\hat u_\rho$ are the stacked nominal control inputs for the velocity and the radius change rate of the obstacles. These are defined to keep the originally chosen positions $q_i(t_0)$ and radius $\rho_i(t_0)$---which can be selected arbitrarily in the ball world {, as long as they satisfy conditions (C1) to (C3)}:
\begin{equation}
	\label{eq:obstnom}
\hat u_{q_i} = K_p(q_i(t_0) - q_i),~~
\hat u_{\rho_i} = K_p(\rho_i(t_0) - \rho_i),
\end{equation}
where $K_p>0$ is a controller gain. Finally, the parameter $\kappa$ in \eqref{eq:stateavoidanceQP} is introduced  to favor a position change of the obstacles rather than a size change.

The presented framework is implementable and useful if the Main \ac{qp} is feasible and there are no undesirable asymptotically stable equilibrium points in the resulting state trajectory. These two mentioned properties are the subject of the following two propositions, respectively, which make up the main contribution of this paper.

\begin{proposition}[Feasibility of Main \ac{qp}]
\label{prop:feasibility}
If the system is safe at time $t_0$, the Main \ac{qp} \eqref{eq:stateavoidanceQP} is feasible and will be feasible for all $t> t_0$. Therefore, the system will remain safe for all times $t>t_0$.
\end{proposition}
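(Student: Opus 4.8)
The plan is to prove feasibility pointwise at $t_0$ by exhibiting a single control that satisfies all the constraints, and then to propagate feasibility and safety forward in time with the forward-invariance argument underlying Theorem~\ref{thm:safety}. The key structural observation I would exploit is the sign of the coefficients multiplying the radius rates: in \eqref{eq:cbf_constr_c1} the coefficient of $u_{\rho_i}$ is $2\rho_i>0$; in \eqref{eq:cbf_constr_c10} the coefficient of $u_{\rho_0}$ is $-2\rho_0<0$; in \eqref{eq:cbf_constr_c2} the coefficients of $u_{\rho_i}$ and $u_{\rho_j}$ are both $2(\rho_i+\rho_j)>0$; and in \eqref{eq:cbf_constr_c3} the coefficients of $u_{\rho_i}$ and $u_{\rho_0}$ are $2(\rho_0-\rho_i)$ and $-2(\rho_0-\rho_i)$, with the favourable sign as long as $\rho_0>\rho_i$, i.e. the obstacle lies strictly inside the ball world (a consequence of $h_{i0}\ge 0$). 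In words, \emph{shrinking every obstacle while growing the ball world} relaxes all four families of constraints at once.

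Concretely, I would test the candidate $u_{q_i}=0$ for all $i$, $u_{\rho_i}=-c$ for $i\ge 1$, and $u_{\rho_0}=+c$, with $c>0$. Substituting into \eqref{eq:cbf_constr_c1}, \eqref{eq:cbf_constr_c10}, \eqref{eq:cbf_constr_c2}, and \eqref{eq:cbf_constr_c3} makes their left-hand sides equal to $-2\rho_i c$, $-2\rho_0 c$, $-4(\rho_i+\rho_j)c$, and $-4(\rho_0-\rho_i)c$, respectively, all of which tend to $-\infty$ as $c\to\infty$. Every right-hand side, on the other hand, stays finite: the state-image velocity $\dot q$ entering $b_{C1,i}$ and $b_{C1,0}$ is bounded because $\mc X$ is compact, $f,g$ are locally Lipschitz, and $F$ is a diffeomorphism, while $\gamma(h_{ij})$ and $\gamma(h_{i0})$ are finite. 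Hence for $c$ large enough the candidate is feasible, so the feasible set of the Main \ac{qp} is nonempty whenever the system is safe. It is worth noting that safety at the current instant gives $h_{ij}\ge 0$ and $h_{i0}\ge 0$, so (since $\gamma$ is extended class $\mc K_\infty$) $b_{C2,ij}=\gamma(h_{ij})\ge 0$ and $b_{C3,i}=\gamma(h_{i0})\ge 0$; thus \eqref{eq:cbf_constr_c2} and \eqref{eq:cbf_constr_c3} already hold at $u=0$, and it is only \eqref{eq:cbf_constr_c1} and \eqref{eq:cbf_constr_c10}, whose right-hand sides carry the uncontrolled $\dot q$, that genuinely require the shrinking action guaranteed by $\rho_i>0$.

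To propagate in time I would argue that applying the minimizer of the (feasible) Main \ac{qp} enforces every constraint, i.e. $\dot h_i+\gamma(h_i)\ge 0$, $\dot h_0+\gamma(h_0)\ge 0$, $\dot h_{ij}+\gamma(h_{ij})\ge 0$, and $\dot h_{i0}+\gamma(h_{i0})\ge 0$. Since each barrier is nonnegative at $t_0$, the comparison lemma (equivalently, the forward-invariance statement of Theorem~\ref{thm:safety} applied to each barrier) keeps it nonnegative for all $t>t_0$, so the safe set $\hat{\mc S}$ of \eqref{eq:safesetball} is forward invariant; because $F$ maps free space to free space, safety of $q$ is equivalent to safety of $x$, which yields the final claim. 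Crucially, maintained nonnegativity of all barriers is exactly the hypothesis used in the feasibility construction above, so feasibility and safety reinforce each other along the trajectory.

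The step I expect to be the main obstacle is closing this loop without circular reasoning: feasibility at a future time is deduced from safety there, while safety is itself a consequence of having applied feasible solutions up to that time. I would discharge this with a continuity/Nagumo argument — the barriers cannot become negative because on $\{h=0\}$ the applied control enforces $\dot h\ge-\gamma(h)=0$, and the feasibility construction holds on a neighbourhood of every safe configuration; the strict convexity of the objective (with $\kappa>0$) together with the constraint structure gives the Lipschitz continuity of the minimizer needed to invoke Theorem~\ref{thm:safety}. A secondary point requiring care is that the radii remain admissible, namely $\rho_i>0$ (so that the $2\rho_i$ coefficient in \eqref{eq:cbf_constr_c1} stays usable) and $\rho_0>\rho_i$ (so that \eqref{eq:cbf_constr_c3} retains its sign); both are sustained by the nominal inputs \eqref{eq:obstnom}, which restore each radius toward its admissible initial value whenever a transient shrink is not demanded by the proximity of the state. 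A final modelling subtlety is the precise meaning of $\dot q$ in \eqref{eq:cbf_constr_c1}--\eqref{eq:cbf_constr_c10}: treating it as the state-image velocity known at the current instant keeps the constraints affine in the obstacle inputs, which is what the argument above uses, and verifying that this is consistent with $q=F(x)$ as $F$ is re-evaluated is the one place where I would be most careful.
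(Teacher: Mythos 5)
Your core construction---take $u_{q_i}=0$, $u_{\rho_i}=-c$, $u_{\rho_0}=+c$ and let $c\to\infty$, exploiting the favourable signs of the radius-rate coefficients---is exactly the first step of the paper's own proof (stated there as: if $\rho_i>0$ for all $i$, there exist $u_{\rho_i}<0$, $u_{\rho_0}>0$ satisfying \eqref{eq:cbf_constr_c1} through \eqref{eq:cbf_constr_c3}), and your limiting argument makes that step more explicit than the paper does. The genuine gap is in how you dispose of the degenerate case $\rho_i=0$. You claim the radii remain strictly positive because the nominal inputs \eqref{eq:obstnom} ``restore each radius toward its admissible initial value.'' That is not a valid argument: the nominal inputs enter only the \emph{cost} of the QP, and whenever the CBF constraints are active the minimizer is dictated by the constraints, not by $\hat u_\rho$. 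If the state persistently approaches obstacle $i$, constraint \eqref{eq:cbf_constr_c1} can force $u_{\rho_i}<0$ at every instant, and nothing in the QP bounds $u_{\rho_i}$ from below or imposes $\rho_i\ge 0$; the configuration $\rho_i=0$ is therefore reachable, and it is precisely where your candidate fails: the coefficient $2\rho_i$ vanishes, so the left-hand side of \eqref{eq:cbf_constr_c1} equals $0$ for every $c$, while the right-hand side $-2(q_i-q)\tr\dot q+\gamma(h_i)$ can be negative when $q$ moves toward the now point-like obstacle.

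The paper closes exactly this case with a different control direction, one your proof never uses: the obstacle translation inputs. When $\rho_i=0$, it observes that \eqref{eq:cbf_constr_c2} and \eqref{eq:cbf_constr_c3} are satisfied with $u_{q_i}=0$ and that \eqref{eq:cbf_constr_c10} can be handled by $u_{\rho_0}>0$; then, for \eqref{eq:cbf_constr_c1}, it splits into two cases: if $(q_i-q)\tr\dot q<0$ (the state moves away) the zero input suffices, and otherwise it chooses $u_{q_i}\neq 0$ with $(q_i-q)\tr u_{q_i}>0$, i.e., it \emph{translates} the degenerate obstacle away from the state, driving the left-hand side of \eqref{eq:cbf_constr_c1} down through the $-2(q_i-q)\tr u_{q_i}$ term instead of the vanished radius term, and checks this choice is compatible with the remaining constraints. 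To repair your proof you must either add this second construction or prove rigorously (not via the cost function) that $\rho_i$ stays bounded away from zero---something the paper does not attempt and that the QP structure does not guarantee. Your time-propagation and Nagumo-style discussion is fine in spirit (the paper is itself informal on that point), but the ``feasible for all $t>t_0$'' claim is only as strong as the pointwise construction, and that construction currently has a hole at $\rho_i=0$.
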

\begin{proof}
{
We need to show that the feasible set of the Main \ac{qp} \eqref{eq:stateavoidanceQP} is never empty, that is there always exist inputs $u_{q_i}$ and $u_{\rho_i}$ to the obstacles in the feasible set. As the system starts safe, the values of $h_0$, $h_i$, and $h_{ij}$ are positive. Therefore, necessarily $0<\rho_i<\rho_0$ for all $i$. Then, if $\rho_i>0$ $\forall i$, $\exists u_{\rho_i}<0, u_{\rho_0}>0$ s.t. \eqref{eq:cbf_constr_c1} to \eqref{eq:cbf_constr_c3} are all satisfied.

Negative radii rates of change, $u_{\rho_i}<0$, however, might lead to $\rho_i=0$ for some $i$. In this case, \eqref{eq:cbf_constr_c2} and \eqref{eq:cbf_constr_c3} are satisfied by $u_{q_i}=0$, and $\exists u_{\rho_0}>0$ s.t. \eqref{eq:cbf_constr_c10}---which contains the sign-indefinite term $(q_i-q)\tr\dot q$---is satisfied. The constraint \eqref{eq:cbf_constr_c1} also contains the sign-indefinite term $(q_i-q)\tr\dot q$ on the right hand side. Thus, if $(q_i-q)\tr\dot q<0$, i.e., $q$ moves away from  $\hat {\mc O}_i$, then $u_{q_i}=0$ is a solution of \eqref{eq:cbf_constr_c10} as well. Otherwise, a $u_{q_i}\neq0$ so that $(q_i-q)\tr u_{q_j}>0$, i.e.,  $\hat {\mc O}_i$ moves away from $q$, has to be chosen. Such a choice does not conflict with any of the other constraints, because $\exists u_{\rho_0}\neq0, u_{\rho_i}\neq0$ s.t. constraints \eqref{eq:cbf_constr_c1} to \eqref{eq:cbf_constr_c3} are simultaneously satisfied.
}
\end{proof}

Despite the feasibility guarantees given in Proposition~\ref{prop:feasibility}, there could still be cases in which the solution of the Main \ac{qp} leads to the undesired outcome $\dot x=0$ even with $\dot q\neq0$. In the following proposition, we show how the proposed state-avoidance algorithm guarantees that deadlocks only happen for a zero-measure set of configurations, even in the unfavorable presence of concave obstacles in the real world.

\begin{proposition}[Deadlocks introduced by Main \ac{qp}]
\label{prop:deadlocks}
The set of obstacle configurations that may lead to deadlock has measure zero.
\end{proposition}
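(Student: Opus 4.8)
The plan is to convert the deadlock condition into an algebraic equation on the configuration and then show that this equation can hold only on a null set. Let $\vartheta=(q_1,\dots,q_M,\rho_0,\dots,\rho_M)$ collect the obstacle parameters, and recall that the diffeomorphism is re-evaluated from them, so $q=F_\vartheta(x)$. Writing $\dot q$ for the nominal ball-world velocity that the real-world input is chosen to realize, the chain rule gives $\dot q=DF_\vartheta(x)\,\dot x+\partial_\vartheta F_\vartheta(x)\,\dot\vartheta$, where $\dot\vartheta=(u_q^*,u_\rho^*)$ is the Main \ac{qp} minimizer. Since $F_\vartheta$ is a diffeomorphism, $DF_\vartheta(x)$ is everywhere invertible, so the input realizing $\dot q$ produces $\dot x=DF_\vartheta(x)^{-1}\big(\dot q-\partial_\vartheta F_\vartheta(x)\,\dot\vartheta\big)$. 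A deadlock --- $\dot x=0$ with $\dot q\neq0$ --- is therefore \emph{equivalent} to the vector equation $\dot q=\partial_\vartheta F_\vartheta(x)\,\dot\vartheta$, i.e. the nominal motion of the state is exactly absorbed by the reconfiguration of the obstacles.

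Next I would expose how $\dot\vartheta$ depends on the configuration through the optimality conditions of the Main \ac{qp}. Its objective is strictly convex and separable and the constraints \eqref{eq:cbf_constr_c1}--\eqref{eq:cbf_constr_c3} are affine in $(u_q,u_\rho)$, so the minimizer is unique and piecewise affine in the data: on the open region where a fixed subset of constraints is active it equals $\dot\vartheta=\hat u-H^{-1}A_{\mathrm{act}}\tr\lambda$, with $H\succ0$ the diagonal Hessian, $A_{\mathrm{act}}$ the stacked gradients of the active constraints, and $\lambda\ge0$ the multipliers solving $A_{\mathrm{act}}H^{-1}A_{\mathrm{act}}\tr\lambda=A_{\mathrm{act}}\hat u-b_{\mathrm{act}}$. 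As the entries of $A_{\mathrm{act}}$, $b_{\mathrm{act}}$ and $\hat u$ are polynomials in $(q,\vartheta)$, the map $(q,\vartheta)\mapsto\dot\vartheta$ is rational on each region, and substituting it into the deadlock equation yields, per region, a real-analytic map $\Phi(q,\vartheta)$ into $\R^n$ whose zero set is exactly the deadlock set of that region.

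The final step is a genericity argument. There are finitely many possible active sets, hence finitely many regions, and the boundaries between them form a finite union of lower-dimensional --- hence null --- sets. On each region $\Phi$ is real-analytic, and a real-analytic map on a connected open set either vanishes identically or has a zero set of Lebesgue measure zero, so it is enough to rule out $\Phi\equiv0$ on each region. On the region with empty active set, $\dot\vartheta=\hat u=K_p(\vartheta(t_0)-\vartheta)$, and at a configuration with the state far from every obstacle and the obstacles near their nominal values the term $\partial_\vartheta F_\vartheta(x)\,\hat u$ is negligible while $\dot q\neq0$, so $\Phi\neq0$ there. The deadlock set is then contained in the finite union of these null zero sets and the null region boundaries, and therefore has measure zero.

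The step I expect to be the crux is this last non-degeneracy check on the regions with a \emph{nonempty} active set, where an obstacle is genuinely moving: one must guarantee that $\partial_\vartheta F_\vartheta(x)\,\dot\vartheta$ cannot coincide with $\dot q$ on an entire open set of configurations. This couples the \ac{qp} multipliers with the derivative $\partial_\vartheta F_\vartheta$ of the diffeomorphism, which is fixed only by the explicit construction of Section~\ref{sec:experiments}; the natural route is to exploit that $F_\vartheta$ reduces to a near-identity map away from the obstacles, so that $\partial_\vartheta F_\vartheta$ is localized near the active obstacle and cannot reproduce the global nominal field $\dot q$ identically. A secondary, purely technical point will be to confirm that the finitely many region boundaries, where $\dot\vartheta$ is only piecewise smooth, are harmlessly absorbed into the null set.
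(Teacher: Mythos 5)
There is a genuine gap, and you flagged it yourself: your argument reduces the proposition to showing that the real-analytic deadlock map $\Phi$ does not vanish identically on each active-set region, but you only verify this on the region where \emph{no} constraint is active --- which is precisely the region where deadlocks are not a concern in the first place. Deadlocks arise exactly when the state is pressing against an obstacle, i.e., when constraint \eqref{eq:cbf_constr_c1} is active with a positive multiplier (this is the situation on the line $\{x_1=0,\,x_2\geq 3.45\}$ of the paper's Fig.~3). On those regions your suggested escape route --- that $\partial_\vartheta F_\vartheta$ is ``localized near the active obstacle'' and hence cannot reproduce $\dot q$ --- points in the wrong direction: when the constraint is active the state \emph{is} near the obstacle, $\partial_\vartheta F_\vartheta$ is largest exactly there, and exact cancellation $\dot q = \partial_\vartheta F_\vartheta(x)\,\dot\vartheta$ is a live possibility. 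So the non-degeneracy check you defer is not a technical loose end; it is the entire content of the proposition. It would moreover tie the result to the explicit diffeomorphism of Section~\ref{sec:experiments}, whereas the proposition is stated for an arbitrary diffeomorphism $F$.

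Note also that your starting point differs from the paper's mechanism. Under Algorithm~\ref{alg:safemultipleconcave} the real-world velocity is computed as $\dot x = (\partial F/\partial x)^{-1}\dot q$ --- the parameter-drift term $\partial_\vartheta F_\vartheta\,\dot\vartheta$ is not fed back into $\dot x$ --- so with $\dot q\neq 0$ the paper concludes $\dot x$ can vanish only in the limit $\|\partial F/\partial x\|\to\infty$, which for a diffeomorphism onto the ball world forces either $\rho_0\to\infty$ or $\rho_i\to 0$; these degeneracies in turn require persistently aligned motion (the state moving exactly toward $q_i$, or obstacle $i$ moving exactly toward the boundary, for all $t$), which is claimed to occur only from a null set of configurations. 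Separately, both your argument and any completed version of it must address the words ``may lead to'': even if the set where the deadlock identity holds is null, the closed-loop flow could steer a positive-measure set of initial configurations into it --- a measure-zero equilibrium attracting an open basin is exactly the pathology of the standard \ac{cbf}--\ac{qp} that motivates this paper. The paper's proof handles this (admittedly heuristically) through its ``for all $t$'' persistence conditions; your static genericity argument, as written, bounds only the instantaneous cancellation set, so you would need an additional argument that this set is not attractive for the closed loop.
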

\begin{proof}
{
In order to quantify the measure of the set of configurations under which deadlock occurs, we can proceed by characterizing these configurations as follows. As the Main \ac{qp} is shown to be feasible in Proposition~\ref{prop:feasibility}, there always exist a control input for the obstacles in the ball world. This means that, the mapped state $q$ in the ball world will move without colliding. As the free space in the ball world is mapped to the free space in the real world by an appropriate diffeomorphism $F$, the only condition under which there may be a deadlock is when $\dot q \neq 0$ in the ball world and yet $\dot x = 0$ in the real world.

Then, assume $\dot q(t) \neq 0$. As $\dot x = \frac{\partial F}{\partial x}^{-1} \dot q$, $\dot x(t)\to0$ if $\left\| \frac{\partial F}{\partial x} \right\| \to\infty$, and since $F$ is a diffeomorphism, this happens only if either (i) the radius of the ball world, $\rho_0$, diverges, or (ii) the radius of obstacle $i$, $\rho_i\to0$, for some $i$. Case (i) only happens when $\forall t$ $u_{q_i}(t) = \alpha (q_i(t)-q_0(t))$, $\alpha>0$, i.e., obstacle $i$ moves towards the boundary of the ball world, whereas case (ii) implies that $\forall t$ $\dot q(t) = \alpha (q_i(t)-q(t))$, $\alpha>0$, i.e., $q$ moves towards the center of obstacle $i$. Therefore, the set of configurations in which (i) or (ii) occur has measure zero in $\R^n$, and, as the number of obstacles is finite, the union of such sets for all obstacles has measure zero.
}
\end{proof}

Proposition~\ref{prop:feasibility}~and~\ref{prop:deadlocks} show that the Main \ac{qp} is feasible and leads to deadlock only for a zero-measure set of configurations, even in the critical scenarios with multiple concave obstacles. This advantage will be showcased in the next section, where the proposed state-avoidance algorithm is compared with a traditional obstacle-avoidance one which does not leverage the mapping to the ball world.

{It is worth mentioning that, by decoupling the system in the real world from the system in the ball world, Theorem 2 in \cite{reis2020control} can be leveraged to obtain a controller for the position and radius of the obstacles which does not introduce undesirable asymptotically stable equilibria when coupled with the \acp{cbf} defined in this section.}

{The described approach is summarized in Algorithm~\ref{alg:safemultipleconcave}.  It is worth mentioning that Step 4 follows from chain rule applied to the derivative of the mapping $q = F(x)$, and it requires the evaluation of the inverse Jacobian of the diffeomorphism, $\frac{\partial {F^{(k)}}^{-1}}{\partial q}$. Moreover, it is important to point out that the evaluation of the control input $u$ in the real world at Step 9 depends on the considered system. For instance, in Sec.~\ref{subsec:robot_experiment}, we apply feedback linearization to a differential-drive robot modeled with unicycle dynamics.}
\begin{algorithm}	
	\caption{{Safety with multiple non-convex obstacles}}
    \label{alg:safemultipleconcave}
	\begin{algorithmic}[1]
		{
		\Require $F$, $\Delta t$, $\gamma$, $\kappa$, $K_p$, $q_i(t_0)$ and $\rho_i(t_0)$, $i=0,\ldots,M$
		\State $k = 0$
		\While{true}
		\State $k \leftarrow k+1$
		\State $\dot q^{(k)} = L_f F^{(k)}\left(x^{(k)}\right) + L_gF^{(k)}\left(x^{(k)}\right)u^{(k)}$
		\State Compute $\hat u_q$ and $\hat u_\rho$ \Comment{\eqref{eq:obstnom}}
		\State Compute $u_q^*$ and $u_\rho^*$ by solving Main QP \Comment{\eqref{eq:stateavoidanceQP}}
		\State $q_i^{(k+1)} \leftarrow q_i^{(k)} + u_{q_i}^*\Delta t$, $\rho_i^{(k+1)} \leftarrow \rho_i^{(k)} + u_{\rho_i}^*\Delta t,~\forall i$
		\State Update $F^{(k+1)}$; compute $\dot x^{(k)} = \dfrac{\partial {F^{(k+1)}}^{-1}}{\partial q}\dot q^{(k)}$
		\State Compute $u^{(k+1)}$ to track $\dot x^{(k)}$; apply $u^{(k+1)}$ \Comment{\eqref{eq:ds_syst_cont}}
		\EndWhile
		}
	\end{algorithmic}
\end{algorithm}

\section{Experimental Results}\label{sec:experiments}
In this section, we show the effectiveness of the approach presented in this paper by considering obstacles defined as the union of multiple convex and concave regions of the state space. In order to do that we need a diffeomorphic mapping between the real world and the ball world. In the following, we present a way to construct an example of such a diffeomorphism for concave star-shaped obstacles. {As shown in Fig.~\ref{fig:good_behavior}, the diffeomorphism is then integrated in our \ac{cbf}--\ac{qp} that moves spherical unsafe sets and reduce their radius, guaranteeing safety and convergence in the real world. The obstacle regions need not be star-shaped, nor needs their shape be the same. In these cases, finding an analytic expression of the diffeomorphism between the real and the ball world might not always be possible. Then, one can resort to computational tools capable of providing efficient evaluations of the sought mapping, together with its inverse and Jacobian \cite{loizou2012navigation,choi2015fast}}. 

\subsection{Mapping real world to ball world}\label{subsec:diffeo_ball_sphere}
In our approach, the choice of the diffeomorphism between the state space $\mc X$ and the ball world is arbitrary. In this section, we consider the example of star-shaped concave obstacles, and follow the approach initially described in \cite{rimon1991construction} to construct a diffeomorphism between a real world with star-shaped obstacles and the ball world.

In order to map the real world to the ball word, we start by assuming that the state space $\mc X$ and the unsafe regions within it can be described as super- and sublevel sets, respectively, of real-valued analytic functions $\beta_i \colon \R^n \to \R$ for which 0 is a regular value. The state space $\mc X$ is assumed to be connected and compact such that $\mc X^\circ \subset \{ x\in\R^n \colon \beta_0(x) > 0 \}$ and $\partial \mc X \subset \{ x\in\R^n \colon \beta_0(x) = 0 \}$. Moreover, it is assumed that $M$ obstacles are present in $\mc X$. These are denoted by $\mc O_i$ and correspond to the interior of connected and compact subsets of $\R^n$ such that (i) $\bar {\mc O}_i \subset \mc X^\circ~ \forall i$, (ii) $\mc X \setminus \bar {\mc O}_i \subset \{ x\in\R^n \colon \beta_i(x)>0 \}$, (iii) $\partial {\mc O}_i \subset \{ x\in\R^n \colon \beta_i(x)=0 \}$, and (iv) $\bar {\mc O}_i \cap \bar {\mc O}_j = \emptyset \quad \forall i\neq j$. With these definitions, we can express the safe set as $\mc S = \mc X \setminus \bigcup\limits_{i=1}^M \mc O_i$. As derived in \cite{rimon1991construction}, a diffeomorphism $F$ between the safe set $\mc S$ in the real world and the safe set $\hat {\mc S}$ in the ball world given in \eqref{eq:safesetball} is given by:
    \begin{equation}
	   F(x) = \sum_{i=0}^{M}\sigma_{i,\lambda}(x)\left( \rho_i f_i(x) + q_i \right)+ \sigma_{g,\lambda}(x) \left(x-x_g + q_g\right),
        \label{eq:diffeo}
    \end{equation}
where $f_i(x) = \frac{\|x-x_i\|}{r_i(\theta)}\begin{bmatrix}
	\cos\theta,
	\sin\theta
\end{bmatrix}\tr,$ with $\theta = \angle(x-x_i)$, $x_i$ is the center of the star-shaped obstacle, and $r_i$ is the function describing how the radius changes as a function of the angle $\theta$. The functions $\sigma_{i,\lambda}$ are defined as 
$		\sigma_{i,\lambda} = \frac{\gamma_g\bar\beta_i}{\gamma_g\bar\beta_i+\lambda \beta_i},~ i=0\ldots M$,  $\sigma_{g,\lambda} =1-\sum_{i=0}^{M} \sigma_{i,\lambda}$, with $\gamma_g = \|x-x_g\|^2$ and $	\bar\beta_i = \prod\limits_{j=0,j\neq i}^{M} \beta_j$. Finally, $x_g$ and $q_g$ are referred to as goal points in the real and ball worlds, respectively. In the context of this work, they can be set to any point in the safe set of the real world and ball world, respectively.

\subsection{Avoid multiple concave regions}
\begin{figure}
\centering
\includegraphics[width=0.8\columnwidth]{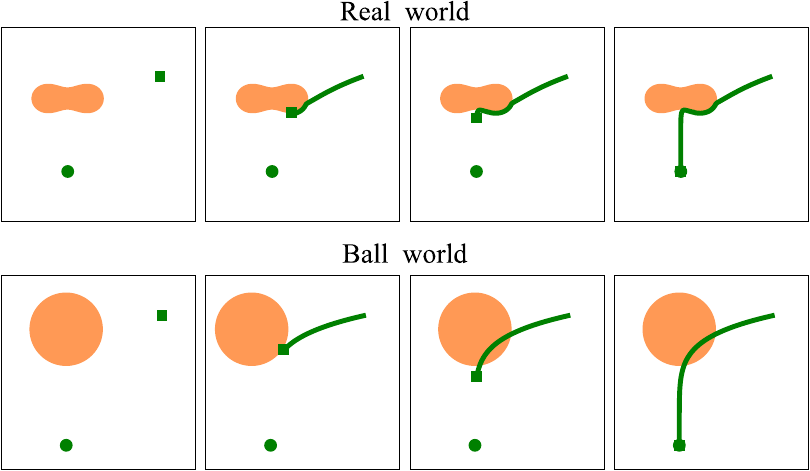}
\caption{{The working principle of our \ac{cbf}--\ac{qp}. In order to avoid a concave obstacle, we first use a diffeomorphism to map the obstacle (orange shapes in real world) into a sphere (orange shapes in ball world). We then use our \ac{cbf}--\ac{qp} to move (and reduce the radius---not shown) of the obstacle in the sphere world. The solution of the \ac{cbf}--\ac{qp} is mapped back to the real word to render the safe set forward invariant. }} 
\label{fig:good_behavior}
\end{figure}

In this experiment, we consider multiple non-convex obstacles (depicted in orange in Fig.~\ref{fig:avoid_concave}). The unsafe set is made up of star-shape concave obstacle regions described as detailed in Fig.~\ref{fig:avoid_concave}. The uncontrolled dynamics are given in~\eqref{eq:dyn_sys_example}, and are stable but unsafe. The results of the application of our proposed method are compared with the traditional \ac{cbf}--\ac{qp} \eqref{eq:qp_safety}. Figure~\ref{fig:avoid_concave} shows that the system without any CBF constraint (blue dotted lines) is stable but not safe, while the controller synthesized using a standard \ac{cbf}--\ac{qp} (black dashed lines) generates a stable attractor at the boundary of the unsafe set. For the standard \ac{cbf}--\ac{qp}, we used the \ac{cbf} in~\eqref{eq:cbf_circle}. The proposed Main QP generates trajectories (green solid lines) which are both safe and asymptotically converge to the origin. The parameters of the diffeomorphism \eqref{eq:diffeo} and the Main QP \eqref{eq:stateavoidanceQP} have been set to $\lambda = 100$ and $\kappa=1$. 

\begin{figure}
\centering
    \includegraphics[width=0.8\columnwidth]{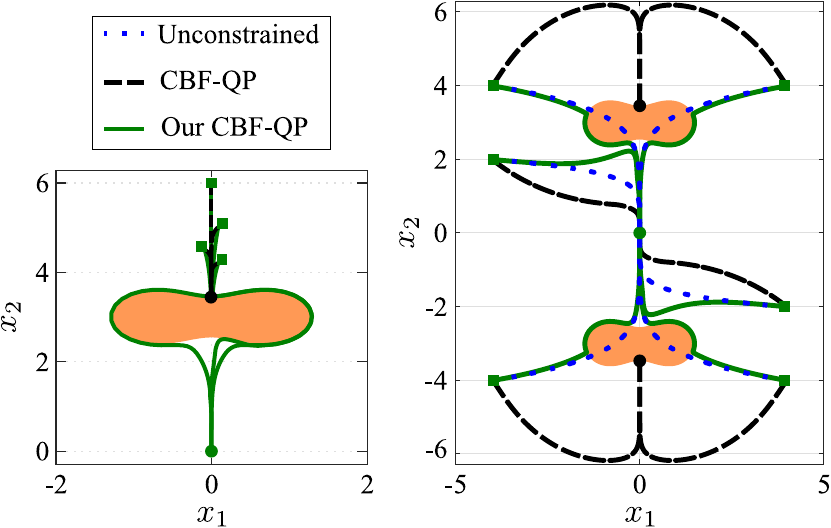}
    \caption{\ac{cbf}-based \acp{qp} used to avoid single (left) and multiple (right) concave obstacle regions (orange shapes) starting from different initial conditions. {Obstacles are described as $\beta_i(x) = \big((x_1-x_{ci,1}-a)^2+(x_2-x_{ci,2})^2\big) \big((x_1-x_{ci,1}+a)^2 + (x_2-x_{ci,1})^2\big) - b^4$ where $i=1,2$, $a=1$, and $b=1.1$. The centers are placed at $x_{c1}=[0,3]\tr$ and $x_{c2}=[0,-3]\tr$.} The unconstrained system dynamics is unsafe as it enters the unsafe set. The standard \ac{cbf}--\ac{qp} ensures safety, but it generates an undesired asymptotically stable equilibrium point. Our \ac{cbf}--\ac{qp} ensures both safety and stability to the desired equilibrium point, i.e., the origin. {Our \ac{cbf}--\ac{qp} leads to deadlock only for a zero-measure set of initial configurations---the line $\{(x_1,x_2)\colon x_1=0, x_2\geq 3.45\}$ in the left panel.}}
     \label{fig:avoid_concave}
\end{figure}

Proposition~\ref{prop:deadlocks} guarantees that our approach lead to deadlock only for a zero-measure set of configurations. We experimentally verified this proposition and the results are shown in Fig.~\ref{fig:avoid_concave} (left). Deadlocks occur only for initial states starting from a line (zero-measure set). For any other initial state, arbitrarily close to this line, the stability of the origin is guaranteed. As already discussed, in the same situation a traditional \ac{cbf}--\ac{qp} generates an undesired asymptotically stable equilibrium point.

\subsection{{Robotic Experiments}}\label{subsec:robot_experiment}
\begin{figure}
\centering
\subfigure[]{\label{subfig:robot_arm}\includegraphics[width=0.24\textwidth]{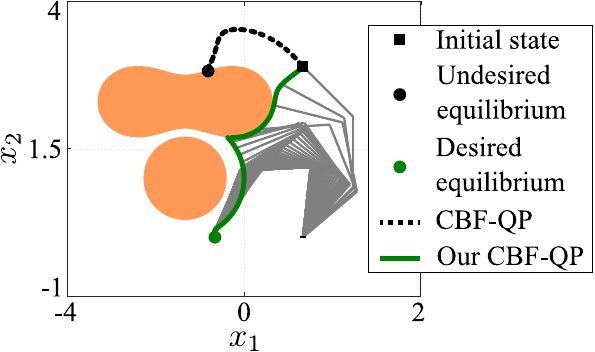}}~
\subfigure[]{\label{subfig:robotarium_with}\includegraphics[trim={3cm 0cm 8cm 4cm},clip,width=0.23\textwidth]{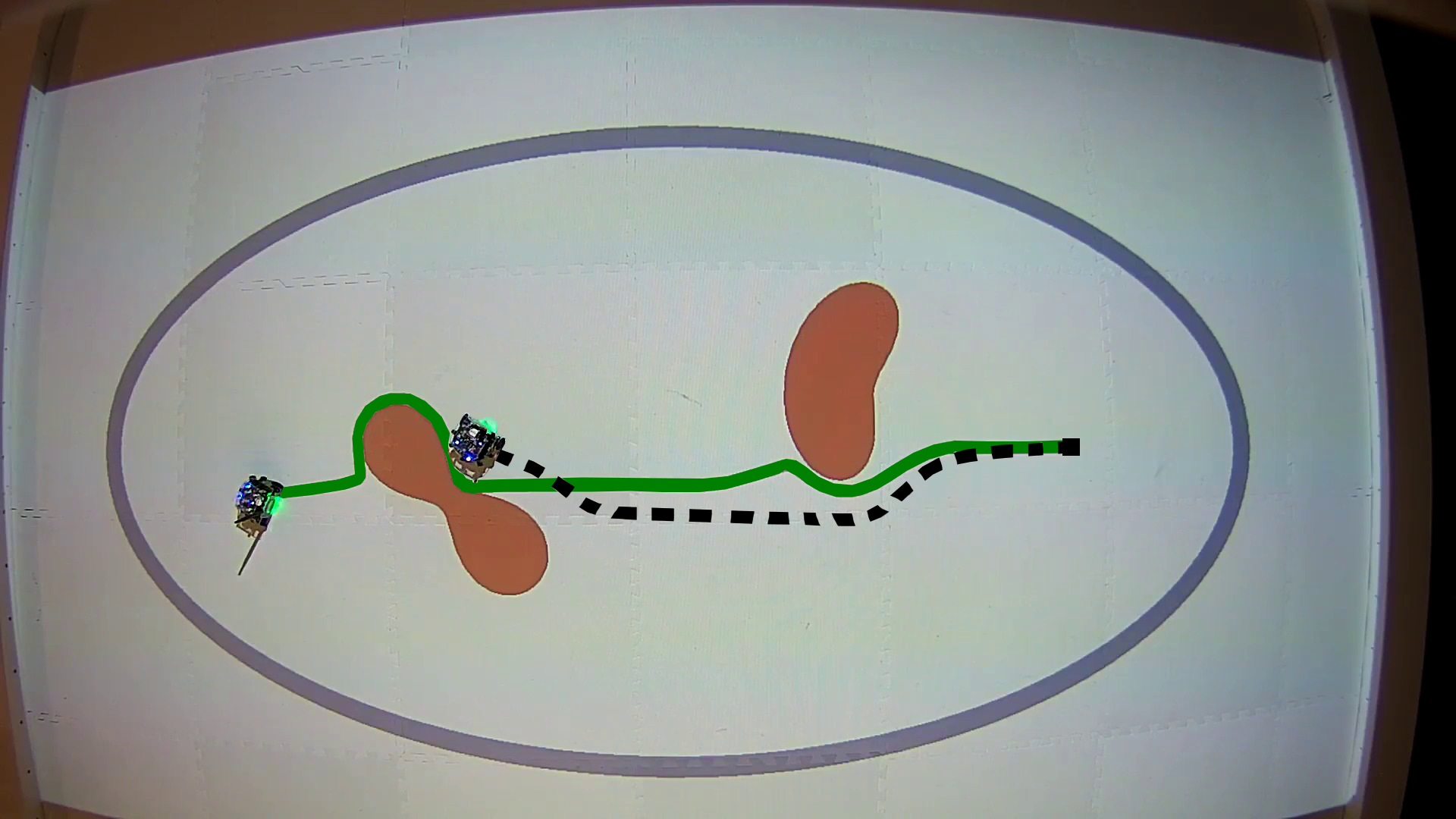}}
\caption{Two robotic experiments performed using traditional and proposed \ac{cbf}--\ac{qp}. With the proposed approach both (a) a planar 3-link manipulator (grey lines) and (b) a mobile robot navigating in the Robotarium are able to reach the goal (green dot) while avoiding multiple unsafe regions (orange-shaded regions). 
\changed{A full video of the Robotarium experiment is available at } \protect\url{https://youtu.be/p18XnJShUGU}.}
\label{fig:robotarium}
\vspace{-0.2cm}
\end{figure}

The presented approach has been implemented on a simulated 3-link planar arm and a real differential-drive robot on the Robotarium \cite{wilson2020robotarium}. Figure~\ref{fig:robotarium} shows the two experiments performed using \changed{traditional and our \ac{cbf}--\ac{qp}}. In Fig.~\ref{subfig:robot_arm}, with our formulation the end-effector reaches the goal while avoiding 2 obstacles, while the traditional \ac{cbf}--\ac{qp} generates an undesired equilibrium. In Fig.~\ref{subfig:robotarium_with}. An elliptical environment containing two obstacles, (projected onto the Robotarium testbed as orange regions) has to be navigated by the robot to reach the goal location (green dot on the left). Also in this case, the traditional \ac{cbf}--\ac{qp} generates an undesired equilibrium point, which is eliminated using the proposed approach.

\section{Conclusions}\label{sec:conclusion}
In this paper, we presented a method to ensure safety of dynamical systems in presence of multiple non-convex obstacle regions. The developed algorithm leverages control barrier functions and it is amenable for fast and online implementation in many feedback control applications. Moreover, despite the reactive nature of the developed controller, the closed-loop system does not suffer from the presence of undesirable asymptotically stable equilibrium points, typically generated when competing objectives of safety and stability are enforced using control barrier functions. The effectiveness of the proposed approach is shown in simulation and through its implementation on a real robotic platform employed in a navigation scenario.

\bibliographystyle{IEEEtran}
\bibliography{bibliography.bib}

\end{document}